\documentclass[runningheads]{llncs}
\usepackage[T1]{fontenc}
\usepackage{graphics,amsmath,amssymb,arydshln,yhmath,pifont}
\usepackage{tikz}
\usetikzlibrary{positioning,arrows.meta,calc}
\usepackage{mathtools}
\mathtoolsset{showonlyrefs}
\usepackage[hidelinks]{hyperref}

\newtheorem{Proposition}{Proposition}
\newtheorem{Definition}{Definition}
\newtheorem{Assumption}{Assumption}
\newtheorem{Lemma}{Lemma}

\def\Ac{{\mathcal A}}

\def\Nc{{\mathcal N}}

\def\Rbb{{\mathbb R}}

\def\Sc{{\mathcal S}}

\def\0{{\bf 0}}

\newcommand{\bitem}{\begin{itemize}}
\newcommand{\eitem}{\end{itemize}}
\newcommand{\btabular}{\begin{tabular}}
\newcommand{\etabular}{\end{tabular}}
\newcommand{\bcenter}{\begin{center}}
\newcommand{\ecenter}{\end{center}}
\newcommand{\bea}{\begin{eqnarray}}
\newcommand{\eea}{\end{eqnarray}}
\newcommand{\bean}{\begin{eqnarray*}}
\newcommand{\eean}{\end{eqnarray*}}
\newcommand{\ba}{\left[ \begin{array}}
\newcommand{\ea}{\\ \end{array} \right]}
\newcommand{\bear}{\begin{array}}
\newcommand{\eear}{\\ \end{array}}

\newcommand{\non}{\nonumber}

\newcommand{\ra}{\rightarrow}

\newcommand*{\QEDB}{\hfill\ensuremath{\blacksquare}}%
\newcommand*{\QET}{\hfill\ensuremath{\triangleleft}}
\newcommand{\norm}[1]{\left\lVert#1\right\rVert}

\newcounter{subequation}
\def\beasub{\addtocounter{equation}{+1}
\setcounter{subequation}{\value{equation}}
\setcounter{equation}{0}
\renewcommand{\theequation}{\arabic{subequation}\alph{equation}}
\begin{eqnarray}}
\def\eeasub{\end{eqnarray}
\setcounter{equation}{\value{subequation}}
\renewcommand{\theequation}{\arabic{equation}}}

\usepackage{color}

\begin{document}
\title{Divergence-based Safety Measure for Large Language Models via Rational Inattention}
\titlerunning{Divergence-based Safety Measure for LLMs}
% If the paper title is too long for the running head, you can set
% an abbreviated paper title here
%
\author{Anh Tung Nguyen\inst{1}\orcidID{0000-0001-9316-233X} \and
\\
Quanyan Zhu\inst{2}\orcidID{0000-0002-0008-2953}
}
\authorrunning{A. T. Nguyen and Q. Zhu}
% First names are abbreviated in the running head.
% If there are more than two authors, 'et al.' is used.
%
\institute{Department of Information Technology, Uppsala University, Box 337, SE 75105, Sweden
\\
\email{anh.tung.nguyen@it.uu.se}
\and
Department of Electrical and Computer Engineering,
New York University, NY, 11201, USA
\\
\email{qz494@nyu.edu}
}
\maketitle              % typeset the header of the contribution
\begin{abstract}
This paper proposes a divergence-based safety measure for large language models (LLMs) under embedding-input attacks. The proposed measure quantifies the worst-case Kullback--Leibler divergence between the clean and attacked LLMs' output distributions, subject to a stealthiness constraint. 
This constraint is constructed by leveraging the equivalence between transformer attention used in LLMs and rational inattention modeling human decision-making.
We analyze the proposed divergence-based safety measure by investigating perfectly undetectable attacks and deriving its upper bound through a Bregman-divergence argument. 
The proposed safety measure is applied to two pretrained causal language models, GPT-2 and GPT-Neo-125M, to show nontrivial output-distribution shifts, illustrating that the measure can distinguish model-level safety profiles.
\keywords{Large language models \and Rational inattention \and Safety measure \and AI security \and Stealthy attacks.}
\end{abstract}
\section{Introduction}
%% New introduction

Large language models (LLMs), such as OpenAI GPT and Meta Llama, have been widely employed in many real-world applications thanks to their ability to generate text responses to human queries \cite{minaee2024large}. As LLMs are increasingly integrated into decision-support pipelines, human operators may in the near future consult LLM-generated outputs before making consequential decisions. This deployment paradigm introduces a system-level safety concern, where malicious perturbations of LLMs' inputs may alter the model's outputs and consequently influence the downstream decision made by a human operator.

This concern motivates the need for quantitative safety measures for LLMs. 
More specifically, from a systems and control perspective, a measure quantifying the worst-case output perturbation caused by input attacks is surely demanded, e.g., \cite{nguyen2025scalable}. Such a safety measure can support model comparison, runtime monitoring, and fine-tuning of LLM parameters to mitigate the impact of cyber-attacks.
However, to the best of our knowledge, limited progress has been made from a systems and control perspective toward developing such safety measures for LLMs \cite{chang2024survey}.
% no such safety measure for LLMs has been introduced with theoretical support. 

To fill this gap, this paper introduces a divergence-based safety measure for LLMs. We consider attacks on the embedding-input tokens and measure the attack impact by the Kullback--Leibler (KL) divergence between the clean and attacked LLMs' output distributions. The proposed measure quantifies the worst-case output-distribution divergence induced by energy-bounded attacks that remain stealthy with respect to an internal monitoring output. A larger value of the proposed measure indicates that the LLMs' output distribution can be changed more significantly by a stealthy attack. Hence, a smaller value corresponds to a safer model in the sense of reduced worst-case stealthy output impact.

The internal monitoring output is constructed from transformer attention \cite{vaswani2017attention}, commonly used in LLMs, through a rational inattention (RI) perspective \cite{mackowiak2023rational}. Recent evidence shows that commonly used LLMs, including OpenAI GPT and Meta Llama, may exhibit bounded-rational behavior similar to humans \cite{macmillan2024ir}. A possible explanation is that the softmax policy employed in scaled dot-product transformer attention \cite{vaswani2017attention} has the same functional form as the optimal action policy in the RI framework (see Figure~\ref{fig:RI_LLMs}). This connection is leveraged as a principled mechanism for constructing an internal monitoring output, which indicates the rationality level of LLMs' outputs.

The contributions of this paper are threefold. First, we show that the attention weights used by LLMs can be interpreted as the solution to an RI-type optimization problem. This establishes a structural correspondence between transformer attention mechanisms and rational-inattention models of decision-making. 
Second, this connection enables us to introduce a head-wise bounded-rationality measure.
% Second, based on this connection, we introduce a head-wise bounded-rationality measure and characterize several of its basic properties, including positivity, monotonicity, and an explicit upper bound. 
Third, we use this measure as a monitoring output in a safety problem where the embedding inputs are under attack. This yields a divergence-based safety measure that quantifies the worst-case KL output-distribution shift induced by energy-bounded attacks that remain stealthy with respect to the bounded-rationality monitor. We also investigate perfectly undetectable attacks and derive an upper bound on the stealthy attack impact. The proposed measure is illustrated by a controlled text-prediction experiment on two pretrained causal language models: GPT-2 \cite{radford2019language} and GPT-Neo-125M \cite{black2021gpt}.

The remainder of this paper is organized as follows. Section~2 revisits the rational-inattention framework and transformer attention, and establishes their structural connection through a shared softmax-based optimization form. We introduce the head-wise bounded-rationality measure, which is used to formulate the proposed divergence-based safety measure for LLMs under embedding-input attacks in Section~3.
An analysis of the proposed safety measure is carried out in Section~4 by studying perfectly undetectable attacks and deriving an upper bound on the stealthy attack impact.
We present a controlled candidate-prediction experiment comparing GPT-2 and GPT-Neo-125M under the same attack budget and trust tolerance in Section~5, while Section~6 concludes the paper.

\begin{figure*}[!t]
    \centering
    \includegraphics[width=\linewidth]{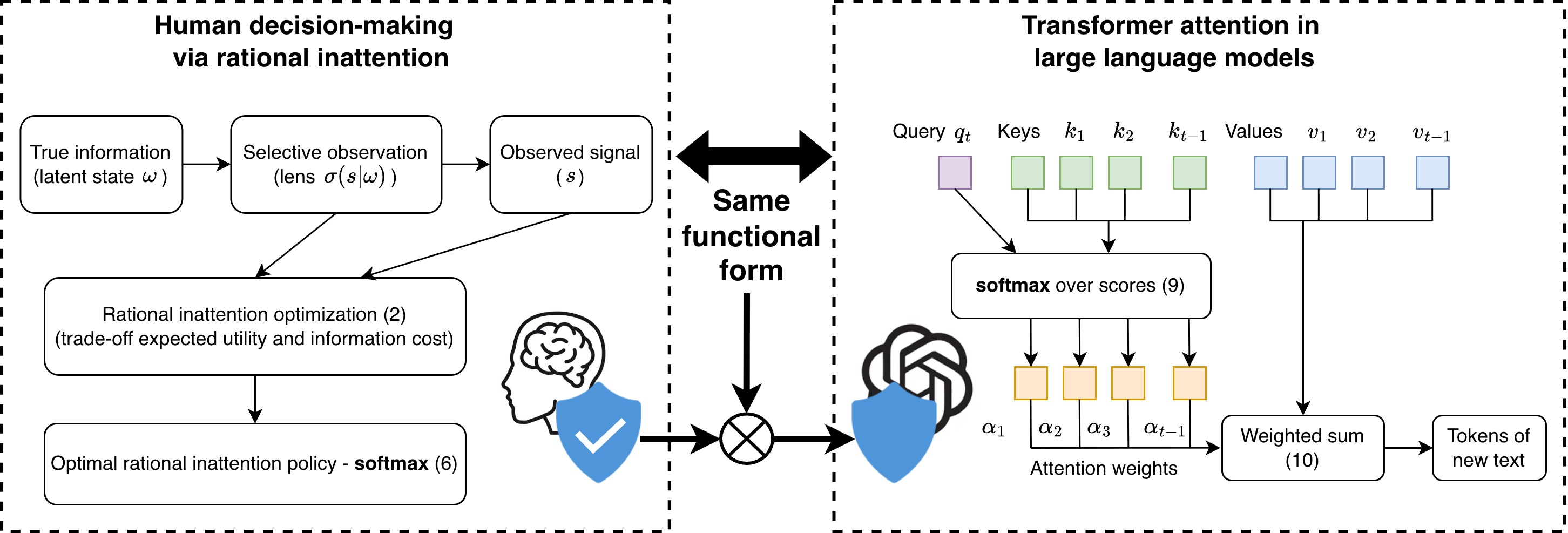}
    \caption{Illustration of the conceptual connection between RI in human decision-making and transformer attention in LLMs. In the RI framework, humans selectively allocate limited information-processing capacity to the most relevant observations before making decisions. Similarly, transformer attention mechanisms allocate attention weights to the most relevant context tokens when generating outputs. The figure highlights that both systems rely on the same softmax-based functional structure for selective information allocation, providing a bounded-rational interpretation of transformer attention in LLMs. This connection also enables us to leverage well-developed output perturbations for RI to quantify safety measures for LLMs.}
    % \vspace{-8pt}
    \label{fig:RI_LLMs}
\end{figure*}

\section{Rational Inattention and Large Language Models}
In this section, we revisit the RI framework and the transformer attention commonly used in LLMs. The purpose is to establish a mathematical connection between them.
\subsection{Rational inattention framework}

RI is a framework that formulates the human decision-making process when humans have a finite capacity for processing all the available data. In this framework, humans only observe a version of the true information via their customized lens, which can enhance the quality of the observed information by paying more attention at a higher cost. The better-quality data is expected to yield a higher utility for the human decision-maker. Therefore, the human decision-maker is interested in trading off the incremental utility and the extra information cost, which is the core formulation in the RI framework.

We revisit the RI framework by denoting a latent state $\omega \in \Omega$, which is not directly observed and has a probability $\pi(\omega)$. On the other hand, humans can only observe a state $s \in S$, which has a policy  
\begin{align}
    \sigma:\Omega \to \Delta(S).
\end{align}
Here, $\sigma(s|\omega)$ tells us the probability to observe $s$ when $\omega$ occurs. After observing $s$, we have an action policy $\mu(a|s)$ over an action space $\Ac$, i.e.,
\begin{align}
    \mu:S \to \Delta(\Ac).
\end{align}
With the action $a$ and the latent state $\omega$, we obtain the corresponding utility $u(a,\omega)$.

As discussed above, the information cost is formulated from the difference between observing $s$ and the latent state $\omega$ using the standard KL divergence as follows:
\begin{align}
    I(\omega ; s) &\triangleq D_{\text{KL}}(\pi(\omega) \sigma(s|\omega) \| \pi(\omega) \sigma(s)) \non \\
    &=  \sum_{\omega \in \Omega, \, s \in \Sc} \pi(\omega) \sigma(s|\omega) \log \frac{\sigma(s | \omega)}{\sigma(s)}. \label{def_inf_cost}
\end{align}
This information cost $I(\omega ; s)$ informs how well the state $s$ is observed. This can also be interpreted as how much attention is paid to $s$ when humans take an action, which is determined by solving the following optimization problem:
% Based on the above notation, an action policy is determined by solving the following optimization problem:
\begin{align}
    \sup_{\sigma, \mu} ~ \sum_{\omega \in \Omega, s \in \Sc, a \in \Ac} ~ \pi(\omega) \sigma(s|\omega) \mu(a|s) u(a,\omega) - \kappa I(\omega;s).
    \label{def_obj}
\end{align}
Here, the first term stands for the expected utility and the information cost is shown in the second term with $\kappa>0$, which plays a key role in the trade-off between the utility and the information cost.
%, which will appear in the optimal solution later. 
% In principle, the objective function \eqref{def_obj} aims to trade off between maximizing the expected utility and minimizing the information cost. 
We will discuss a solution to \eqref{def_obj} in the following.

Let us denote the latent-feedback action policy and the corresponding information cost in the same form as \eqref{def_inf_cost}: 
\begin{align}
    P(a|\omega) &\triangleq  \sum_{s \in \Sc} \mu(a|s) \sigma(s|\omega), \label{def_P_a_omega}
    \\
    I(\omega ; a) &\triangleq D_{\text{KL}}(\pi(\omega) P(a|\omega) \| \pi(\omega) P(a)) \non \\
    &=  \sum_{\omega \in \Omega, a \in \Ac} \pi(\omega) P(a|\omega) \log \frac{P(a|\omega)}{P(a)}. \label{def_info_cost_ri}
\end{align}
Then, by substituting \eqref{def_P_a_omega}--\eqref{def_info_cost_ri} into the optimization problem \eqref{def_obj}, we obtain the reduced optimization problem:
\begin{align}
    \sup_{P(a|\omega) \geq 0}& ~~ \sum_{\omega \in \Omega, a \in \Ac} \pi(\omega) P(a|\omega) u(a,\omega) - \kappa I(\omega ; a)~~ \label{def_obj_red}\\
    \text{s.t.}~~~&~~~~~  \sum_{a \in \Ac} P(a|\omega) = 1,~ \forall \omega \in \Omega. \non 
\end{align}

In the RI framework, the human decision-making process essentially amounts to solving \eqref{def_obj_red}, whose solution is presented in the following proposition.

\begin{Proposition}
\label{prop:RI_unique_sol}
        The optimization problem \eqref{def_obj_red} has a unique solution, which is
	 	\begin{align}
	 		P^\star(a|\omega) = \frac{P(a)  \exp \big(  u(a,\omega)/\kappa)}{\sum_{a' \in \Ac} P(a')  \exp \big(  u(a',\omega)/\kappa)}. 
	 		\label{ri_def_action}
	 	\end{align}
        \vspace{1pt}
\end{Proposition}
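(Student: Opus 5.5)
We treat \eqref{def_obj_red} as a concave maximization over the product of simplices $\{P(\cdot|\omega)\}_{\omega\in\Omega}$. Here $P(a)=\sum_{\omega}\pi(\omega)P(a|\omega)$ is the induced action marginal. We may assume $\pi(\omega)>0$ for all $\omega$, since states with zero prior drop out of the objective. The argument has three steps: establish concavity, write the first-order (Lagrangian/KKT) conditions, and then show that the maximizer is unique.

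\textbf{Concavity.} The utility term is linear in $P(\cdot|\cdot)$. The mutual information $I(\omega;a)$, viewed as a function of the channel $P(a|\omega)$ for fixed $\pi$, is convex; this is the standard convexity of mutual information in the channel. It follows from joint convexity of the KL divergence, since $I(\omega;a)=D_{\text{KL}}\big(\pi(\omega)P(a|\omega)\,\|\,\pi(\omega)P(a)\big)$ and both arguments are linear in $P(\cdot|\cdot)$. Since $\kappa>0$, the objective is concave, the feasible set is compact and convex, and the supremum is therefore attained.

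\textbf{Stationarity.} Attach multipliers $\lambda(\omega)$ to the normalization constraints and differentiate with respect to $P(a|\omega)$ at points with $P(a|\omega)>0$. The marginal-dependence term in the derivative of $I$ contributes $\sum_{\omega'}\pi(\omega')P(a|\omega')\cdot\big(-\pi(\omega)/P(a)\big)=-\pi(\omega)$. This is a constant, so it is absorbed into $\lambda$. The stationarity condition becomes
\begin{align}
\pi(\omega)u(a,\omega)-\kappa\pi(\omega)\Big(\log\frac{P(a|\omega)}{P(a)}+1-1\Big)=\lambda(\omega),
\end{align}
which gives $P(a|\omega)=P(a)\exp(u(a,\omega)/\kappa)\,c(\omega)$ for some $c(\omega)>0$. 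Imposing $\sum_a P(a|\omega)=1$ fixes $c(\omega)$ and yields \eqref{ri_def_action}. Actions with $P(a)=0$ satisfy \eqref{ri_def_action} trivially, because then $P(a|\omega)=0$ for all $\omega$ with $\pi(\omega)>0$. This boundary case needs a KKT or one-sided-derivative remark rather than equality, which I would handle by noting that the directional derivative toward such actions is $\le 0$ at the optimum.

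\textbf{Uniqueness.} This is the main obstacle, because $I(\omega;a)$ is not strictly convex in the channel in general, and \eqref{ri_def_action} is an implicit fixed-point equation in $P(a)$. My first approach is the Blahut--Arimoto style decomposition
\begin{align}
I(\omega;a)=\min_{q\in\Delta(\Ac)}\sum_{\omega}\pi(\omega)D_{\text{KL}}(P(\cdot|\omega)\|q).
\end{align}
This recasts the problem as a joint maximization over $(P,q)$ whose objective is strictly concave in $P(\cdot|\omega)$ for each fixed $q$. For fixed $q$, the optimal channel is the unique Gibbs form $q(a)e^{u/\kappa}/Z$. It remains to argue that the optimal marginal $P(a)$ is unique. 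That marginal maximizes the strictly concave function $q\mapsto\sum_\omega\pi(\omega)\log\sum_a q(a)e^{u(a,\omega)/\kappa}$ over the simplex, provided the vectors $\{e^{u(\cdot,\omega)/\kappa}\}$ satisfy a genericity condition (e.g.\ no duplicate actions). Under that condition the unique marginal determines the unique channel via \eqref{ri_def_action}. I would state the genericity condition explicitly, or else interpret ``unique'' as uniqueness of the channel given the optimal marginal.
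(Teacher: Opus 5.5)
Your route is correct. It differs from the paper, which gives no argument of its own and simply defers to the generalized-logit characterization of Mat\v{e}jka and McKay. Your Blahut--Arimoto reduction to maximizing $q\mapsto\kappa\sum_\omega\pi(\omega)\log\sum_a q(a)e^{u(a,\omega)/\kappa}$ over the simplex is essentially how that literature obtains both the formula and its uniqueness, so what you gain is a self-contained proof. Note that the decomposition alone already yields \eqref{ri_def_action}. If $P^\star$ is optimal and $\bar q$ is its induced marginal, then $(P^\star,\bar q)$ is optimal for the joint problem. Hence $P^\star$ maximizes the inner problem at $q=\bar q$ and must equal its unique Gibbs maximizer $\bar q(a)e^{u(a,\omega)/\kappa}/\sum_{a'}\bar q(a')e^{u(a',\omega)/\kappa}$. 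This makes the Lagrangian step dispensable, which helps because your boundary remark targets the wrong case. Actions with $P(a)=0$ need nothing. The case to exclude is $P(a)>0$ with $P(a|\omega)=0$, where the formula would give a positive value. There $-\kappa\pi(\omega)\log\big(P(a|\omega)/P(a)\big)\to+\infty$, so shifting a little mass into $a$ at $\omega$ strictly improves the objective.

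Your suspicion about uniqueness is justified, since the statement as written is false in general. However, ``no duplicate actions'' is not a sufficient genericity condition. Take two equally likely states, $\kappa=1$, and three actions with $\big(e^{u(a,\omega_1)},e^{u(a,\omega_2)}\big)$ equal to $(1,3)$, $(2,2)$, $(3,1)$. These points lie on the line $x+y=4$, so the reduced objective is $\tfrac12\log x+\tfrac12\log(4-x)$ with $x=q_1+2q_2+3q_3$. Every $q=(s,1-2s,s)$ with $s\in[0,\tfrac12]$ is then optimal. The corresponding channels $P(\cdot|\omega_1)=(s/2,\,1-2s,\,3s/2)$ and $P(\cdot|\omega_2)=(3s/2,\,1-2s,\,s/2)$ are pairwise distinct and all attain the value $\log 2$. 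What your equality-in-Jensen argument actually requires is that no nonzero $d\in\Rbb^{|\Ac|}$ satisfies both $\sum_a d(a)=0$ and $\sum_a d(a)e^{u(a,\omega)/\kappa}=0$ for every $\omega$ with $\pi(\omega)>0$. Equivalently, the vectors $\big(e^{u(a,\omega)/\kappa}\big)_{\omega}$, $a\in\Ac$, must be affinely independent. Two distinct maximizers $q_1\neq q_2$ would satisfy exactly these equations with $d=q_1-q_2$. Likewise, full support of $\pi$ is a hypothesis, not a harmless reduction: if $\pi(\omega)=0$, then $P(\cdot|\omega)$ is unconstrained and uniqueness fails trivially.
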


%\begin{proof}
%     \tcr{check with literature. How about the uniqueness?}
% \end{proof}

Proposition 1 follows from the generalized multinomial-logit characterization of the optimal RI choice rule under Shannon mutual-information cost (see \cite{matvejka2015rational,matejka2014supplementary} for more details).
In the following, we revisit the transformer attention \cite{vaswani2017attention}, commonly used in LLMs, with the aim of establishing its connection to the optimal action policy provided by the RI framework in \eqref{ri_def_action}.

\subsection{Large language models}
\label{sec:llm}

In short, LLMs are systems that can generate human-like text with the aim of reasonably responding to human queries. During the generation process, LLMs are provided with a sequence of text and are required to generate more text reasonably. In practice, LLMs do not directly work on the true text; instead, they work on the embedding tokens (vectors) of the true text, which are denoted as
\begin{align}
	\{x_i\}_{1\le i\le t-1} \triangleq \{x_1, x_2, \ldots, x_{t-1}\},~ x_i \in \Rbb^d, \label{llm_input_tokens}
\end{align}
at time $t$. The model is required to predict the next token $x_t$ based on the previous tokens $\{x_i\}_{1\le i\le t-1}$.

To predict the next token $x_t$ more precisely, the model uses a multi-head scheme, where each head distributes its own attention on the previous tokens and produces its own internal representation for generating $x_t$. Concatenating all the internal representations given by all the heads gives an overall internal representation that decides the next token $x_t$. This scheme is inspired by invoking multiple experts' views on the same problem to make the best decision.

Let us assume that the model employs a set of $M$ heads with indices \(\{1,2, \ldots, \allowbreak M\}\), where each head uses query, key, and value with the corresponding learned mappings
\begin{align}
	W_q^{(m)} \in \Rbb^{d_k \times d}, ~~
	W_k^{(m)} \in \Rbb^{d_k \times d}, ~~
	W_v^{(m)} \in \Rbb^{d_v \times d}, ~~\forall m \in \{1,2,\ldots,M\}. \non
\end{align} 
We assume that these learned mappings are given by a pre-trained phase, which are used to compute query, key, and value as follows:
\begin{align}
	q_t^{(m)}  &\triangleq W_q^{(m)}  x_t \in \Rbb^{d_k},~~
	k_i^{(m)}  \triangleq W_k^{(m)}  x_i \in \Rbb^{d_k}, \\ 
	v_i^{(m)}  &\triangleq W_v^{(m)}  x_i \in \Rbb^{d_v}, ~~ \forall i \leq t-1. \label{llm_def_qkv}
\end{align}

Let us define the following correlation score between the prediction token $x_t$ and the past tokens $\{x_i\}_{1 \leq i \leq t-1}$:
\begin{align}
	s_{t,i}^{(m)} \triangleq {q_t^{(m) \top} k_i^{(m)}}, ~~ \forall i \leq t-1. \label{llm_def_input}
\end{align}
Next, this correlation is used to compute the attention weight by a softmax function as follows:
\begin{align}
	\alpha^{(m)}_{t,i} \triangleq \frac{\exp({s^{(m)}_{t,i}/\tau)}}{\sum_{i=1}^{t-1} \exp({s^{(m)}_{t,i}/\tau})}, \label{llm_def_attention_weight}
\end{align}
where $\tau = {\sqrt{d_k}}$ is a temperature parameter. The use of this temperature parameter enables the model to work better on long sentences \cite{vaswani2017attention}. As a result, the internal representation of each head $m$ is formulated as follows:
\begin{align}
	h_t^{(m)} \triangleq  \sum_{i = 1}^{t-1} \alpha_{t,i}^{(m)} v^{(m)}_i. \label{llm_def_int_rep}
\end{align}
The combined internal representation given by multiple heads can be described as follows:
\begin{align}
	h_t \triangleq  \sum_{m=1}^M ~ W_O^{(m)} h^{(m)}_t,
    \label{def_multi_head_ht}
\end{align}
where $W_O^{(m)}$ is a given learned mapping for each head $m$.

Let us take a pause for observing the attention weight \eqref{llm_def_attention_weight}, which takes the same functional form as \eqref{ri_def_action}. This observation leads to the following claim.
\begin{Proposition}
\label{prop:llm_opt_prob}
		Consider the reduced RI objective function \eqref{def_obj_red} with its unique solution \eqref{ri_def_action} and the attention weight of LLMs \eqref{llm_def_attention_weight}. The attention weight \eqref{llm_def_attention_weight} is a unique solution to the following optimization problem:
		\begin{align}
			V^{(m)}(\tau) \triangleq  \sup_{\alpha^{(m)}_{t,i} \geq 0}&  ~~ \sum_{i=1}^{t-1} \alpha^{(m)}_{t,i} q^{(m) \top}_t k^{(m)}_i 
            - \tau  \sum_{i=1}^{t-1}\alpha^{(m)}_{t,i} \log (\alpha^{(m)}_{t,i}) \label{llm_def_opt} \\
			\text{s.t.}~~&~~  \sum_{i=1}^{t-1} \alpha^{(m)}_{t,i} = 1. 
		\end{align}
\end{Proposition}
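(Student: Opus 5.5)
The plan is to prove the claim directly, as a strictly concave maximization over the probability simplex, and then to record how it specializes Proposition~\ref{prop:RI_unique_sol}. Fix the head $m$ and time $t$. Write $n = t-1$, $s_i = s^{(m)}_{t,i} = q_t^{(m)\top}k_i^{(m)}$ and $\alpha_i = \alpha^{(m)}_{t,i}$. The objective in \eqref{llm_def_opt} is
\begin{align}
F(\alpha) = \sum_{i=1}^{n} \alpha_i s_i - \tau \sum_{i=1}^{n} \alpha_i \log \alpha_i ,
\end{align}
with the convention $0\log 0 = 0$, and the feasible set is the simplex $\Delta_n$.

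\emph{Step 1: existence and uniqueness.} The set $\Delta_n$ is compact and convex, and $F$ is continuous on it, so a maximizer exists. The first term of $F$ is linear. Since $\tau = \sqrt{d_k} > 0$ and $x\mapsto x\log x$ is strictly convex on $[0,\infty)$, the second term is strictly concave. Hence $F$ is strictly concave on $\Delta_n$, and the maximizer is unique.

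\emph{Step 2: the maximizer is interior.} This is the only delicate point. Suppose some $\alpha_j = 0$, and pick $k$ with $\alpha_k > 0$. Move a small mass $\varepsilon$ from $k$ to $j$. The derivative of $-\tau\varepsilon\log\varepsilon$ tends to $+\infty$ as $\varepsilon\downarrow 0$, while every other term changes only at a bounded rate. So $F$ strictly increases for small $\varepsilon$, a contradiction. Therefore all $\alpha_i > 0$ at the optimum, and the nonnegativity constraints are inactive.

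\emph{Step 3: solve the stationarity conditions.} Form the Lagrangian
\begin{align}
\mathcal{L}(\alpha,\lambda) = F(\alpha) - \lambda\Big(\sum_{i=1}^{n}\alpha_i - 1\Big).
\end{align}
Setting $\partial\mathcal{L}/\partial\alpha_i = 0$ gives $s_i - \tau(\log\alpha_i + 1) - \lambda = 0$. Hence $\alpha_i = C\exp(s_i/\tau)$ for a constant $C$ that does not depend on $i$. Imposing $\sum_i \alpha_i = 1$ fixes $C = 1/\sum_{i'}\exp(s_{i'}/\tau)$, which reproduces \eqref{llm_def_attention_weight} exactly. Because the problem is concave with an affine constraint, the KKT conditions are sufficient for optimality. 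Combined with Step 1, this shows the attention weight is the unique solution.

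\emph{Step 4: connection to Proposition~\ref{prop:RI_unique_sol}.} As an alternative derivation and as the interpretation the paper intends, one can use the identity
\begin{align}
F(\alpha) = -\tau\, D_{\text{KL}}(\alpha\,\|\,\alpha^{\star}) + \tau\log\sum_{i=1}^{n} e^{s_i/\tau},
\end{align}
where $\alpha^{\star}$ denotes the softmax weights in \eqref{llm_def_attention_weight}. Gibbs' inequality then gives the maximizer immediately, with equality iff $\alpha = \alpha^{\star}$, and shows $V^{(m)}(\tau) = \tau\log\sum_i e^{s_i/\tau}$. This is precisely the RI solution \eqref{ri_def_action} under the identifications: a single latent state $\omega$ (the query), actions $a = i$ (the past tokens), a uniform prior $P(a)$, utility $u(a,\omega) = s_i$, and $\kappa = \tau$. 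Under these choices the mutual-information cost reduces to negative entropy up to a constant.

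The main obstacle is the boundary argument in Step 2. The KL identity in Step 4 sidesteps it entirely, so I would present that identity as the core of the proof and keep the Lagrangian derivation as motivation.
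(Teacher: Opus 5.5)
Your proof is correct, and it does more than the paper does. The paper gives no argument for Proposition~\ref{prop:llm_opt_prob} beyond observing that \eqref{llm_def_attention_weight} has the same functional form as \eqref{ri_def_action} and leaning on Proposition~\ref{prop:RI_unique_sol}. You instead work directly on the simplex. Strict concavity of $F$ gives uniqueness, and stationarity identifies the softmax. The Gibbs identity $F(\alpha) = -\tau D_{\text{KL}}(\alpha\,\|\,\alpha^\star) + \tau\log\sum_i e^{s_i/\tau}$ then gives existence, uniqueness and the optimal value \eqref{llm_def_V_bounded} at once; the paper obtains that value separately by substitution. Matching the form of a candidate with the solution of a different optimization problem does not by itself prove optimality for \eqref{llm_def_opt}. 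So your self-contained argument is what actually establishes the statement, and the paper's framing supplies only the interpretation. One simplification: Step 2 is not needed even in the Lagrangian route. At the interior point $\alpha^\star$ the gradient $\nabla F(\alpha^\star)$ is a constant multiple of the all-ones vector. Hence the concavity inequality $F(\alpha)\le F(\alpha^\star)+\nabla F(\alpha^\star)^\top(\alpha-\alpha^\star)=F(\alpha^\star)$ already covers boundary points.

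The one thing to fix is the identification in Step 4. In \eqref{def_info_cost_ri}, $P(a)$ is not a free prior; it is the marginal $\sum_\omega \pi(\omega)P(a|\omega)$ induced by the policy. With a single latent state this marginal coincides with $P(a|\omega)$, so $I(\omega;a)\equiv 0$. Then \eqref{def_obj_red} collapses to the fully rational linear program \eqref{llm_def_opt_full}, whose maximizers are argmax policies, not the softmax. So under your choices the mutual-information cost does not reduce to negative entropy. What your identity actually shows is $-\sum_i\alpha_i\log\alpha_i = \log(t-1) - D_{\text{KL}}(\alpha\,\|\,u)$ with $u$ uniform. That is, \eqref{llm_def_opt} is a KL-regularized choice problem against a fixed uniform reference. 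Its Gibbs solution shares the logit form of \eqref{ri_def_action} without being an instance of \eqref{def_obj_red}. You can either present Step 4 as a formal analogy, or make the marginal come out uniform endogenously to get a genuine specialization. For example, take equally likely latent states whose utility vectors are cyclic shifts of $(s_i)$, and symmetrize an optimal policy using concavity of the RI objective. None of this affects your proof of the proposition, which never uses that identification.
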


It is worth noting that the second term of the objective function \eqref{llm_def_opt}, i.e., $H(\alpha^{(m)}) = -\sum_{} \alpha^{(m)}_{t,i} \log (\alpha^{(m)}_{t,i})$, is the so-called Shannon information entropy, which quantifies the uncertainty of the information. Substituting the optimal solution \eqref{llm_def_attention_weight}, which satisfies the constraint in \eqref{llm_def_opt}, into the objective in \eqref{llm_def_opt} yields
\begin{align}
	\bar V^{(m)}(\tau) = \tau \log  \sum_{i=1}^{t-1} \exp({q^{(m)\top}_t k^{(m)}_i/\tau}), \label{llm_def_V_bounded}
\end{align}
which we call a $\tau$-bounded rationality value.

\begin{remark}
    Proposition~\ref{prop:llm_opt_prob} points out that the transformer attention commonly used in LLMs is essentially the solution to an RI-type optimization problem, which has the same form as the objective in the RI framework. This result establishes a connection between the transformer attention in LLMs and human decision-making with bounded rationality, confirming the relationship illustrated in Figure~\ref{fig:RI_LLMs}. \QET
\end{remark}

It can be argued that the RI-based problem \eqref{llm_def_opt} is fully rational if it does not pay for information cost $H(\alpha^{(m)})$, by setting $\tau = 0$, resulting in the following linear program:
\begin{align}
	V^{(m)}_0 =  \sup_{\hat \alpha^{(m)}_{t,i} \geq 0}&~~  \sum_{i=1}^{t-1} \hat \alpha^{(m)}_{t,i} q^{(m)\top}_t k^{(m)}_i  \label{llm_def_opt_full} \\
	\text{s.t.}~~&~~  \sum_{i=1}^{t-1} \hat \alpha^{(m)}_{t,i} = 1.  \non
\end{align}
Simply, solving \eqref{llm_def_opt_full} gives us a (non-unique) solution: $\hat \alpha^{(m)\star}_{t,i} = 1$ if $q^{(m)\top}_t k^{(m)}_i > q^{(m)\top}_t k^{(m)}_j ~\forall j, j \neq i, 1 \leq j \leq t-1$ and $\hat \alpha^{(m)\star}_{t,i} = 0$ otherwise,
yielding the corresponding optimal value of \eqref{llm_def_opt_full} as:
\begin{align}
	\bar V^{(m)}_0 =  \max_{i \in \{1,2,\ldots,t-1\}} ~~ q^{(m)\top}_t k^{(m)}_i, 
	\label{llm_def_V_full}
\end{align}
which is referred to as the full rationality value.

\section{Performance Measures}

In this section, we aim to leverage \eqref{llm_def_V_bounded} and \eqref{llm_def_V_full}, introduced in the previous section, to build a rationality measure and show how it can be used to develop a safety measure.

\subsection{Bounded rationality measure}
From \eqref{llm_def_V_bounded} and \eqref{llm_def_V_full}, we define the difference between these two values as a $\tau$-bounded rationality measure for each head $m$ as follows:
\begin{align}
	\Delta V^{(m)}(\tau) &\triangleq \bar V^{(m)}(\tau) - \bar V^{(m)}_0.  
    %&=\tau \log \sum_{i=1}^{t-1} \exp({q^{(m)\top}_t k^{(m)}_i/\tau})- \max_{i \in \{1,2,\ldots,t-1\}} ~ q^{(m)}_t k^{(m)}_i. 
    \label{llm_def_br_metric}
\end{align}
This measure has some properties listed in the following.
\begin{Proposition}
\label{prop:br_metric_properties}
        Suppose the history tokens \eqref{llm_input_tokens} have a length of at least two, i.e., $t > 2$.
		Consider the $\tau$-bounded rationality measure \eqref{llm_def_br_metric}. It has the following four properties:
		\begin{enumerate}
			\item $\lim_{\tau \ra 0^+} \Delta V^{(m)}(\tau) = 0$;
			\item $\Delta V^{(m)}(\tau) > 0~~  \forall \tau > 0$;
			\item $\Delta V^{(m)}(\tau)$ is strictly increasing in $\tau$;
			\item $\Delta V^{(m)}(\tau) \leq \tau \log(t-1)$. \QET
		\end{enumerate}
\end{Proposition}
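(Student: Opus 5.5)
Throughout, I fix a head $m$ and write $n \triangleq t-1 \geq 2$, $s_i \triangleq q^{(m)\top}_t k^{(m)}_i$ and $s^\star \triangleq \max_i s_i = \bar V^{(m)}_0$. The whole proof rests on one normalization. Pull $s^\star$ out of the log-sum-exp in \eqref{llm_def_V_bounded} to get
\begin{align}
\Delta V^{(m)}(\tau) = \tau \log Z(\tau), \qquad Z(\tau) \triangleq \sum_{i=1}^{n} \exp\big(c_i/\tau\big), \qquad c_i \triangleq s_i - s^\star \leq 0 .
\end{align}
At least one $c_i$ equals $0$, so at least one term of $Z(\tau)$ equals $1$. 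Every term lies in $(0,1]$, and there are $n \geq 2$ terms. It follows that $1 < Z(\tau) \leq n$ for all $\tau>0$.

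Properties 2 and 4 then follow at once. Multiplying $\log Z(\tau) > 0$ by $\tau>0$ gives $\Delta V^{(m)}(\tau) > 0$. Multiplying $\log Z(\tau) \leq \log n$ by $\tau$ gives $\Delta V^{(m)}(\tau) \leq \tau \log(t-1)$.

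For property 1, let $k \geq 1$ be the number of indices attaining the maximum. As $\tau \to 0^+$, each term with $c_i<0$ tends to $0$ and each term with $c_i = 0$ stays equal to $1$, so $Z(\tau) \to k$. Hence $\Delta V^{(m)}(\tau) = \tau \log Z(\tau) \to 0 \cdot \log k = 0$. Alternatively, the bound $0 < \Delta V^{(m)}(\tau) \leq \tau\log n$ already established squeezes the limit to $0$ without any case analysis on ties; I would use this cleaner route.

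Property 3 is the only step that needs a real argument, and I would handle it by differentiation. Let $\alpha_i(\tau) = \exp(c_i/\tau)/Z(\tau)$, which is exactly the attention weight \eqref{llm_def_attention_weight}, since shifting all scores by $s^\star$ leaves the softmax unchanged. Direct computation gives
\begin{align}
\frac{d}{d\tau}\big(\tau \log Z(\tau)\big) = \log Z(\tau) - \frac{1}{\tau}\sum_{i=1}^n \alpha_i(\tau) c_i = -\sum_{i=1}^n \alpha_i(\tau)\log \alpha_i(\tau) = H(\alpha^{(m)}),
\end{align}
where the middle equality uses $\log \alpha_i = c_i/\tau - \log Z$ together with $\sum_i\alpha_i = 1$. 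The same identity can be read off Proposition~\ref{prop:llm_opt_prob} via the envelope theorem: the derivative of the optimal value in $\tau$ equals the entropy term evaluated at the optimizer. Since every $\alpha_i(\tau) > 0$ and $n \geq 2$, the distribution $\alpha^{(m)}$ is not a point mass, so $H(\alpha^{(m)})>0$. The derivative is therefore strictly positive on $(0,\infty)$, which gives strict monotonicity.

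The only delicate point I anticipate is making sure the hypothesis $t>2$ is used exactly where strictness is needed. It enters in two places: in property 2, through $Z(\tau)>1$, and in property 3, through $H>0$. In the degenerate case $n=1$, all four quantities would be identically zero.
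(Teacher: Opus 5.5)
Your proof is correct. It shares the paper's two main ingredients: strict monotonicity from a positive $\tau$-derivative, and property 4 from bounding each term of the log-sum-exp by $\exp(\bar V^{(m)}_0/\tau)$.

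Where you differ is in how properties 1 and 2 depend on each other. The paper obtains property 2 as a consequence of properties 1 and 3: a function that tends to $0$ as $\tau \to 0^+$ and is strictly increasing must be positive. It justifies property 1 only by appeal to ``the structure'' of the linear program defining $\bar V^{(m)}_0$ relative to the entropy-regularized problem, without an explicit argument. It also leaves the derivative computation as ``straightforward.''

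You instead normalize by the maximal score to write $\Delta V^{(m)}(\tau) = \tau \log Z(\tau)$ with $1 < Z(\tau) \le t-1$. This gives property 2 directly and property 4 in the same line, and you then get property 1 by squeezing between $0$ and $\tau\log(t-1)$. Your route buys two things. First, properties 1, 2 and 4 no longer depend on the derivative, and the limit in property 1 actually gets a proof. Second, you identify the derivative explicitly as the entropy $H(\alpha^{(m)})$ of the attention weights, which is the envelope-theorem derivative of the optimal value in Proposition~\ref{prop:llm_opt_prob}. That shows exactly where the hypothesis $t>2$ is needed for strictness, which the paper's sketch leaves implicit.
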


\textbf{Proof:}
Since the first property is obtained by the structure of \eqref{llm_def_opt_full} built from \eqref{llm_def_opt}, it suffices to show $\displaystyle \frac{\partial}{\partial \tau} \Delta V^{(m)}(\tau) > 0~\forall \tau$, which is straightforward, to obtain the second and third properties. 
% Regarding the last property, we denote
% $s^{(m)\star}_t \triangleq \max_{i \in \{1,2,\ldots,t-1\}} ~ q^{(m)\top}_t k^{(m)}_i$, which leads to
%%
The last property is shown through the following relations:
\begin{align}
	 \sum_{i=1}^{t-1} \exp({q^{(m)\top}_t k^{(m)}_i/\tau})& \leq  (t-1) \exp({\bar V^{(m)}_0/\tau}) \non \\
	 \log \sum_{i=1}^{t-1} \exp({q^{(m)\top}_t k^{(m)}_i/\tau}) & \leq 
	\log(t-1) + \frac{\bar V^{(m)}_0}{\tau} \non \\
	\Delta V^{(m)}(\tau)  &\leq \tau \log(t-1). 
\end{align}
Here, the first and last inequalities hold by \eqref{llm_def_V_full}-\eqref{llm_def_br_metric}.
\QEDB

\begin{remark}
\label{rem:use_case_br_metric}
    From the $\tau$-bounded rationality measure in \eqref{llm_def_br_metric} and its properties in Proposition~\ref{prop:br_metric_properties}, this measure should be close to zero when the response is close to full rationality. Since the multi-head scheme is employed, each head should put its main focus on one particular portion of the embedding inputs rather than diffusing its focus over all the embedding inputs. This scheme is motivated by real-world complex problem-solving practices. Take a group of cross-disciplinary experts working on a single project as an example \cite{ERC_Synergy_Grant}, where each expert works on a particular part of the project to provide a wide-angle view of the problem. \QET
\end{remark}

The next question is how the bounded-rationality measure \eqref{llm_def_br_metric} can be used in practice. We foresee that, in the next few years, LLMs will be deployed in many real-world applications, where human operators need to consult them before making crucial decisions. Besides the response itself, human operators desire to know how much bounded rationality the response has. This information would ease the uncertainty in the trust between humans and LLMs. For example, human operators can monitor the $\tau$-bounded rationality measure at every LLM response before making a decision. If the rationality measure over time remains within a small range near zero, the responses might correspond to high rationality. Otherwise, 
%a low rationality, i.e., a high value of $\Delta V^{(m)}(\tau)$, should raise a warning to the human operator.
small values of $\Delta V^{(m)}(\tau)$ correspond to more concentrated attention allocations, which reflect reasoning quality or rationality and warrant further empirical investigation.
This practice is in line with fault-detection techniques in control systems \cite{ding2008model}. Leveraging this bounded-rationality measure, we study a safety problem in LLMs when the embedding input tokens in \eqref{llm_input_tokens} are under attack in the following.

\subsection{Divergence-based safety measure}
In this subsection, we study a safety problem in the responses provided by LLMs by assuming that the embedding tokens \eqref{llm_input_tokens} are maliciously manipulated by an adversary. Let us denote the attacked embedding tokens as:
\begin{align}
    \{ x_1^a, x_2^a, \ldots, x_{t-1}^a \}, \label{llm_input_token_attacked}
\end{align}
where $x_i^a \triangleq x_i + \zeta_i$ and $\zeta_i \in \Rbb^d$ is the attack signal injected into token $x_i$. We assume that the attack signals $\{\zeta_i\}_{1 \leq i \leq t-1}$ are under an energy constraint, i.e.,
\begin{align}
     \sum_{i=1}^{t-1}~ \norm{\zeta_i}^2_2 \leq \rho, \label{llm_energy_bound}
\end{align}
where $\rho$ is a given maximum attack energy.

The aim of the manipulation is to falsify the predicted token $x_t$, resulting in a bad response given by LLMs to human operators. Here, we use the $\tau$-bounded rationality measure \eqref{llm_def_br_metric} as a monitoring indicator for this safety problem, where we want to keep the weighted sum of this measure for all the heads under a given threshold $\epsilon_T$, i.e., 
\begin{align}
    \sum_{m = 1}^M w^{(m)} \Delta V^{(m)}(\tau) \leq \epsilon_T.
\end{align}
The non-negative per-head rationality weight $w^{(m)}$ is given and satisfies
\begin{align}
    \sum_{m=1}^M w^{(m)} = 1.
\end{align}
We can call $\epsilon_T$ a trust tolerance. Human operators should raise a concern about LLMs' responses when this trust tolerance is exceeded.
In Section~\ref{sec:llm}, we provided the computation for the internal representation $h_t$ in \eqref{def_multi_head_ht}, which is a crucial variable to predict the next token through the probability over the action space $\Ac$ with the following softmax law:
\begin{align}
    P(a|h_t) \triangleq \frac{\exp{(z(a,h_t)/\kappa)}}{\sum_{a' \in \Ac} \exp{(z(a',h_t)/\kappa)}}, 
    \label{def_P_a_ht}
\end{align}
where $\kappa > 0$ is a given temperature parameter and 
\begin{align}
    z(\cdot,h_t): \Rbb^{d_h} \ra \Rbb^{|\Ac|} 
    %\triangleq  \sum_{\omega \in \pi} \pi(\omega|h_t) [ u_{LLM}(a, \omega, h_t) | h_t ]. 
    \label{def_z_ah}
\end{align}
as the output score map of the LLM.

~\\
On the other hand, we have the action probability corresponding to the attacked tokens \eqref{llm_input_token_attacked} as $P(a|h_t^a)$. Inspired by the output-to-output gain security measure in \cite[Sec. II.B]{nguyen2025scalable}, which quantifies the worst-case impact of stealthy attacks, we introduce the following divergence-based safety measure for one-step prediction:
\begin{align}
     \sup_{\{\zeta_i\}_{1\leq i \leq t-1}}&~~D_{\text{KL}}(P(a|h^a_t) || P(a|h_t)) \label{llm_sec_metric_one}\\
    \text{s.t.}~~~~&~~ \sum_{m =1}^{M} w^{(m)}   \Delta V^{(m)}(\tau)  \leq \epsilon_T, \non \\
    &~~  \sum_{i=1}^{t-1} \norm{\zeta_i}_2^2 \leq \rho. \non 
\end{align}
Here, $D_{\text{KL}}(P(a|h_t^a) || P(a|h_t))$ stands for the KL divergence of $P(a|h_t^a)$ from nominal action $P(a|h_t)$. The first constraint in \eqref{llm_sec_metric_one} is the stealthiness condition, where the adversary wants to be stealthy to the human operator monitoring the rationality level by keeping it under the trust tolerance $\epsilon_T$. The last constraint in \eqref{llm_sec_metric_one} is identical to \eqref{llm_energy_bound}.

% \begin{remark}
%     It is more interesting to investigate \eqref{llm_sec_metric_one} when only one input token is falsified, i.e., only one single word is manipulated. A large change in the predicted token with a small change in the bounded rationality level is potentially harmful to the human operator receiving the manipulated response. Clearly, this special case is captured in \eqref{llm_sec_metric_one}.
%     \QET
% \end{remark} 

\section{Safety Measure Analysis}
In this section, we analyze the proposed safety measure \eqref{llm_sec_metric_one}. We first discuss perfectly undetectable attacks and then characterize an upper bound for the safety measure \eqref{llm_sec_metric_one}.

\subsection{Perfectly undetectable attacks}
% Let us assume that the embedding input tokens without attacks provide a monitoring indicator below the trust tolerance $\epsilon_T$.
Consider the safety measure \eqref{llm_sec_metric_one}. We are interested in providing a condition under which LLMs are insecure with respect to \eqref{llm_sec_metric_one}, which is presented in the following. 
\begin{Definition}[Insecure LLMs]
    \label{def:insecure_llms}
    Suppose \eqref{llm_sec_metric_one} is the safety measure for an LLM. The LLM is called insecure if the following condition holds:
    \begin{align}
         \sum_{i=1}^{t-1} \alpha^{(m)}_i W^{(m)}_v \hat \zeta_i \neq 0~ \text{for some}~m, \label{pua_impact}
    \end{align}
    where $\{\hat \zeta_i\}_{1 \leq i \leq t-1} \in \bigcap_{m=1}^M ~ \Nc(W^{(m)}_k)$. \QET
    \vspace{5pt}
\end{Definition}

From the safety measure \eqref{llm_sec_metric_one}, we are interested in attack signals $\{\zeta_i\}_{1 \leq i \leq t-1}$, such that they give the same monitoring indicator as the attack-free case. This type of attack is called a perfectly undetectable attack \cite{milovsevic2020actuator}.
Based on the definitions \eqref{llm_def_qkv}, \eqref{llm_def_br_metric}, and \eqref{llm_input_token_attacked}, the perfectly undetectable attack satisfies the following sufficient condition:
\begin{align}
    q^{(m)\top}_t W^{(m)}_k \zeta_i = 0,~~ &\forall i \in \{1,2,\ldots, t-1\}, \label{llm_perfect_stealthy_attack} \\
    &\forall m \in \{1,2,\ldots, M\},~ \zeta_i \neq 0. 
\end{align}

Since $\{\zeta_i\}_{1\le i\le t-1}$ is a solution to \eqref{llm_perfect_stealthy_attack}, the scaled family $\{c \zeta_i\}_{1\le i\le t-1}$ is also a solution for any $c \neq 0$. As a consequence, if we relax the energy constraint in \eqref{llm_sec_metric_one}, the safety measure can be made arbitrarily large. Clearly, the sufficient condition \eqref{llm_perfect_stealthy_attack} yields a nontrivial solution when the key mappings of all the heads share a common nullspace \eqref{pua_impact}, i.e., $\bigcap_{m=1}^M ~ \Nc(W^{(m)}_k) \neq \{ 0 \}$. 
% \begin{align}
%     \bigcap_{m=1}^M ~ \Nc(W^{(m)}_k) \neq \emptyset. \label{pua_cond}
% \end{align}

The impact of a perfectly undetectable attack is serious when it causes a change in the internal representation \eqref{llm_def_int_rep} in some heads \eqref{pua_impact}, resulting in $h^{a(m)}_t - h^{(m)}_t \neq 0$ for some $m$. From the discussion above, this change in the internal representation $h_t$ can be enlarged by changing $c$ arbitrarily.  

\begin{remark}
    Definition~\ref{def:insecure_llms} provides a condition to enhance the safety for LLMs. In practice, one can check the triviality of $\bigcap_{m=1}^{M}\mathcal N\!\bigl(W_k^{(m)}\bigr)$. If it is not trivial, one may slightly perturb some matrices $W_k^{(m)}$ with a very small performance loss so as to avoid perfectly undetectable attacks. In a more advanced technique, the triviality condition can be incorporated into the learning phase of the matrices $W_k^{(m)}$, inspired by physics-informed machine learning techniques.  \QET
\end{remark}

\subsection{Upper bound of the safety measure}
In this part, we aim to characterize the upper bound of the safety measure \eqref{llm_sec_metric_one}. 
The characterization is presented after we introduce a mild assumption in the following.
\begin{Assumption}
    \label{assumption:lipschitz}
    The mapping \eqref{def_z_ah} is Lipschitz continuous in $h$, i.e., there exists a positive constant $L_z > 0$ such that
    \begin{align}
        \norm{z(\cdot,h_t) - z(\cdot,h_t^a)}_2 \leq L_z \norm{h_t - h_t^a}_2, ~\forall h_t, h_t^a. \label{ht_lipzchits}
    \end{align}
\end{Assumption}
\begin{Proposition}
    \label{prop:DKL_upperbound}
    Suppose Assumption~\ref{assumption:lipschitz} holds. The objective function in \eqref{llm_sec_metric_one} is upper bounded:
    \begin{align}
        D_{\text{KL}}(P(a|h_t^a) || P(a|h_t)) \leq  \gamma \norm{h_t - h_t^a}_2^2, \label{DKL_upperbound}
    \end{align}
    where $\gamma = \frac{L_z^2}{4 \kappa^2}$ with $L_z$ being the Lipschitz constant in \eqref{ht_lipzchits} and $\kappa$ being a temperature parameter in \eqref{def_P_a_ht}. \QET
\end{Proposition}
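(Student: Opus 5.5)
The plan is to write the KL divergence between the two softmax laws \eqref{def_P_a_ht} as a Bregman divergence of the log-sum-exp function. A uniform curvature bound on that function then turns the divergence into a squared distance between the score vectors, and Assumption~\ref{assumption:lipschitz} moves this distance to the internal representations.

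\emph{Step 1: Bregman form.} Set $\theta \triangleq z(\cdot,h_t)/\kappa$ and $\theta^a \triangleq z(\cdot,h_t^a)/\kappa$ in $\Rbb^{|\Ac|}$, and let $A(\theta) \triangleq \log \sum_{a\in\Ac} \exp(\theta_a)$. Then $P(a|h_t) = \exp(\theta_a - A(\theta))$ and $\nabla A(\theta^a) = P(\cdot|h_t^a)$. Expanding the definition of the KL divergence gives
\begin{align}
D_{\text{KL}}(P(a|h_t^a) \| P(a|h_t)) = A(\theta) - A(\theta^a) - \nabla A(\theta^a)^\top(\theta-\theta^a),
\end{align}
which is the Bregman divergence $B_A(\theta,\theta^a)$.

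\emph{Step 2: curvature bound.} By Taylor's theorem with integral remainder, $B_A(\theta,\theta^a) \le \frac{1}{2}\sup_{\xi}\lambda_{\max}(\nabla^2 A(\xi))\,\norm{\theta-\theta^a}_2^2$. The Hessian is $\nabla^2 A(\xi) = \mathrm{diag}(p) - pp^\top$ with $p = \nabla A(\xi)$ a probability vector. For any $v$, the quadratic form $v^\top \nabla^2 A(\xi) v$ equals the variance of $v$ under $p$. Popoviciu's inequality bounds this variance by $(\max_a v_a - \min_a v_a)^2/4$. Since $(v_a - v_b)^2 \le 2(v_a^2+v_b^2) \le 2\norm{v}_2^2$, the variance is at most $\norm{v}_2^2/2$. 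Hence $\lambda_{\max}(\nabla^2 A) \le 1/2$ uniformly, and
\begin{align}
D_{\text{KL}}(P(a|h_t^a) \| P(a|h_t)) \le \tfrac{1}{4}\norm{\theta-\theta^a}_2^2 = \tfrac{1}{4\kappa^2}\norm{z(\cdot,h_t)-z(\cdot,h_t^a)}_2^2 .
\end{align}

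\emph{Step 3: Lipschitz transfer.} Applying \eqref{ht_lipzchits} to the right-hand side gives the bound $\frac{L_z^2}{4\kappa^2}\norm{h_t-h_t^a}_2^2$, which is \eqref{DKL_upperbound} with $\gamma = L_z^2/(4\kappa^2)$.

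The main obstacle is Step 2. A cruder bound such as $\lambda_{\max}(\nabla^2 A)\le 1$ would only give $\gamma = L_z^2/(2\kappa^2)$. Obtaining the constant $1/4$ requires the sharper estimate $1/2$, which comes from the variance interpretation together with Popoviciu's inequality.
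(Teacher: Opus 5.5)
Your proposal is correct and follows the paper's proof step for step. You rewrite the KL divergence as the Bregman divergence of log-sum-exp, bound it via the integral remainder and the uniform Hessian bound $\nabla^2 A \preceq \frac{1}{2}I$ to get $\frac{1}{4}\norm{\theta-\theta^a}_2^2$, and then transfer to $h$ with the Lipschitz assumption. The only difference is that the paper proves the Hessian bound with Gershgorin's circle theorem (each eigenvalue lies in $[0,2p_i(1-p_i)]\subseteq[0,\frac{1}{2}]$), whereas you use the variance interpretation and Popoviciu's inequality; both are valid.
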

\begin{proof}
    Let us use some auxiliary variables
    \begin{align}
        \hat z \triangleq \frac{z(\cdot,h_t)}{\kappa}, ~~
        \hat z^a \triangleq \frac{z(\cdot,h_t^a)}{\kappa},  \label{def_hat_z}
    \end{align}
    where $z(\cdot,h)\in\mathbb R^{|\mathcal A|}$ collects the values
    $\{z(a_i,h)\}_{a_i\in\mathcal A}$ defined in \eqref{def_z_ah}. Let us denote
    \begin{align}
        \phi(\zeta) \triangleq  \log \sum_{i = 1}^{|\Ac|} \exp(\zeta_i), ~~ \zeta \in \Rbb^{|\Ac|},  
    \end{align}
    which has
    %$\nabla \phi(\zeta) = \text{softmax}(\zeta)$. 
    \begin{align}
        \nabla \phi(\zeta) = \frac{\exp(\zeta_i)}{\sum_{i = 1}^{|\Ac|} \exp(\zeta_i)} \triangleq \text{softmax}(\zeta).
    \end{align}
    This new notation assists us in re-describing action policies \eqref{def_P_a_ht} as:
    \begin{align}
        P(a|h_t) &= \nabla \phi(\hat z),
        ~P(a|h_t^a) = \nabla \phi(\hat z^a), \label{action_vector_new_form} \\
         \log P(a_i|h_t) &= \hat z_i - \phi(\hat z),~
        \log P(a_i|h_t^a) = \hat z_i^a - \phi(\hat z^a). \label{action_new_form}
    \end{align}

    By definition, the KL divergence in the left-hand side of \eqref{DKL_upperbound} has the following equivalent form:
    \begin{align}
        &D_{\text{KL}}(P(a|h_t^a) || P(a|h_t)) =  \sum_{i=1}^{|\Ac|} P(a_i|h_t^a) \log \frac{P(a_i|h_t^a)}{P(a_i|h_t)} \\
        =&~  \sum_{i = 1}^{|\Ac|} P(a_i|h_t^a) \big( \log P(a_i|h_t^a) -   \log P(a_i|h_t) \big) \\
        =&~  \sum_{i = 1}^{|\Ac|} P(a_i|h_t^a) (\hat z_i^a - \phi(\hat z^a) - \hat z_i + \phi(\hat z)) \\
        =&~ \phi(\hat z) - \phi(\hat z^a) + P(a|h_t^a)^\top(\hat z^a - \hat z)
        \\
        =&~\phi(\hat z) - \phi(\hat z^a) - \nabla \phi(\hat z^a)^\top(\hat z - \hat z^a),
        \label{KLD_new_form}
    \end{align}
    where the first two equalities are given by definition, the third equality comes from \eqref{action_new_form}, the fourth equality comes from the fact that $\phi(\hat z^a)$ and $\phi(\hat z)$ are independent of $P(a_i|h_t^a)$ with $\sum_{i=1}^{|\Ac|}P(a_i|h_t^a) = 1$, and the last equality comes from \eqref{action_vector_new_form}. Now, the right-hand side of \eqref{KLD_new_form} is exactly the Bregman divergence associated with $\phi(\cdot)$. The integral form of the Bregman divergence gives us the following equality:
    \begin{align}
        &D_{\text{KL}}(P(a|h_t^a) || P(a|h_t))  \\
        =& \int_{0}^1 (1-s)(\hat z - \hat z^a)^\top \nabla^2 \phi(\hat z^a + s(\hat z - \hat z^a)) (\hat z - \hat z^a) \text{d}s \\
        \leq&\frac{1}{2} \norm{\hat z - \hat z^a}^2_2 \int_{0}^1 (1-s)\text{d}s   
        ~=  \frac{1}{4} \norm{\hat z - \hat z^a}_2^2 \\
        =&~ \frac{1}{4 \kappa^2} \norm{z(\cdot,h_t) - z(\cdot,h_t^a)}^2_2, \label{DKL_ineq_1/4}
    \end{align}
    where the last equality comes from the definitions \eqref{def_hat_z} and the first inequality comes from the following inequality (see Appendix A for more details): 
    \begin{align}
        \nabla^2 \phi(\zeta) = \text{diag} (\text{softmax}(\zeta)) - \text{softmax}(\zeta) \text{softmax}(\zeta)^\top \preceq \frac{1}{2} I.
    \end{align}
    % \begin{align}
    %     \nabla^2 \phi(\zeta) = \text{diag} (\text{softmax}(\zeta)) - \text{softmax}(\zeta) \text{softmax}(\zeta)^\top \preceq \frac{1}{2} I.
    % \end{align}
    Finally, \eqref{DKL_ineq_1/4} is combined with \eqref{ht_lipzchits} to result in \eqref{DKL_upperbound}. This completes the proof. \QEDB
\end{proof}

The result of Proposition~\ref{prop:DKL_upperbound} allows us to consider the following alternative optimization problem, whose solution, multiplied by $\gamma$, serves as an upper bound for \eqref{llm_sec_metric_one}:
\begin{align}
     \sup_{\{\zeta_i\}_{1\leq i \leq t-1}}&~~\norm{h_t - h_t^a}_2^2 \label{llm_sec_metric_one_alt}\\
    \text{s.t.}~~~&~~ \sum_{m =1}^{M} w^{(m)}   \Delta V^{(m)}(\tau)  \leq \epsilon_T, \non \\
    &~~ \sum_{i=1}^{t-1} \norm{\zeta_i}_2^2 \leq \rho. \non 
\end{align}
It is worth noting that \eqref{llm_sec_metric_one_alt} can be computed using sequential quadratic programming \cite{boggs1995sequential} when we employ an approximation for $\bar V_0^{(m)}$ in \eqref{llm_def_br_metric}, which is $\bar V_0^{(m)} \approx \bar V^{(m)}(\eta)$ with $\eta$ is a very small positive value thanks to properties in Proposition~\ref{prop:br_metric_properties}.

In the special case of perfectly undetectable attacks, discussed in the previous subsection, this type of attack naturally fulfills the first condition in \eqref{llm_sec_metric_one_alt}. Given that $h_t - h_t^a = - \sum_{m=1}^M W_O^{(m)} \sum_{i=1}^{t-1} \alpha_{t,i}^{(m)} W^{(m)}_v \zeta_i$, \eqref{llm_sec_metric_one_alt} is reduced to the following form with its closed-form solution:
\begin{align}
     &\sup_{\{\zeta_i\}_{1\leq i \leq t-1}}~ \norm{ \sum_{i=1}^{t-1} \sum_{m=1}^M W_O^{(m)}  \alpha_{t,i}^{(m)} W^{(m)}_v \zeta_i}_2^2 \label{llm_sec_metric_one_alt_pua} \\
    &~~~~~\text{s.t.}~~~~~~
     \sum_{i=1}^{t-1} \norm{\zeta_i}_2^2 \leq \rho. \non \\
    &=~  \rho \lambda_{\max}  \big( (I_{t-1} \otimes N_k)^\top B^\top B (I_{t-1} \otimes N_k) \big),
\end{align}
where $I_{t-1}$ is the $(t-1)$-dimensional identity matrix, $N_k$ is defined such that $W^{(m)}_k N_k = 0$ for all $m$, and $B  \triangleq \big[ B_1, B_2, \ldots, B_{t-1} \big]$ with 
\begin{align}
   B_i = \sum_{m=1}^M W_O^{(m)}  \alpha_{t,i}^{(m)} W^{(m)}_v 
\end{align}
for all $1 \leq i \leq t-1$. Here, $\lambda_{\max}(\cdot)$ stands for the maximum eigenvalue and $\otimes$ is the Kronecker product.

\section{Experimental Results}
\label{sec:experiments}

\begin{figure}[!t]
    \centering
    \includegraphics[width=\linewidth]{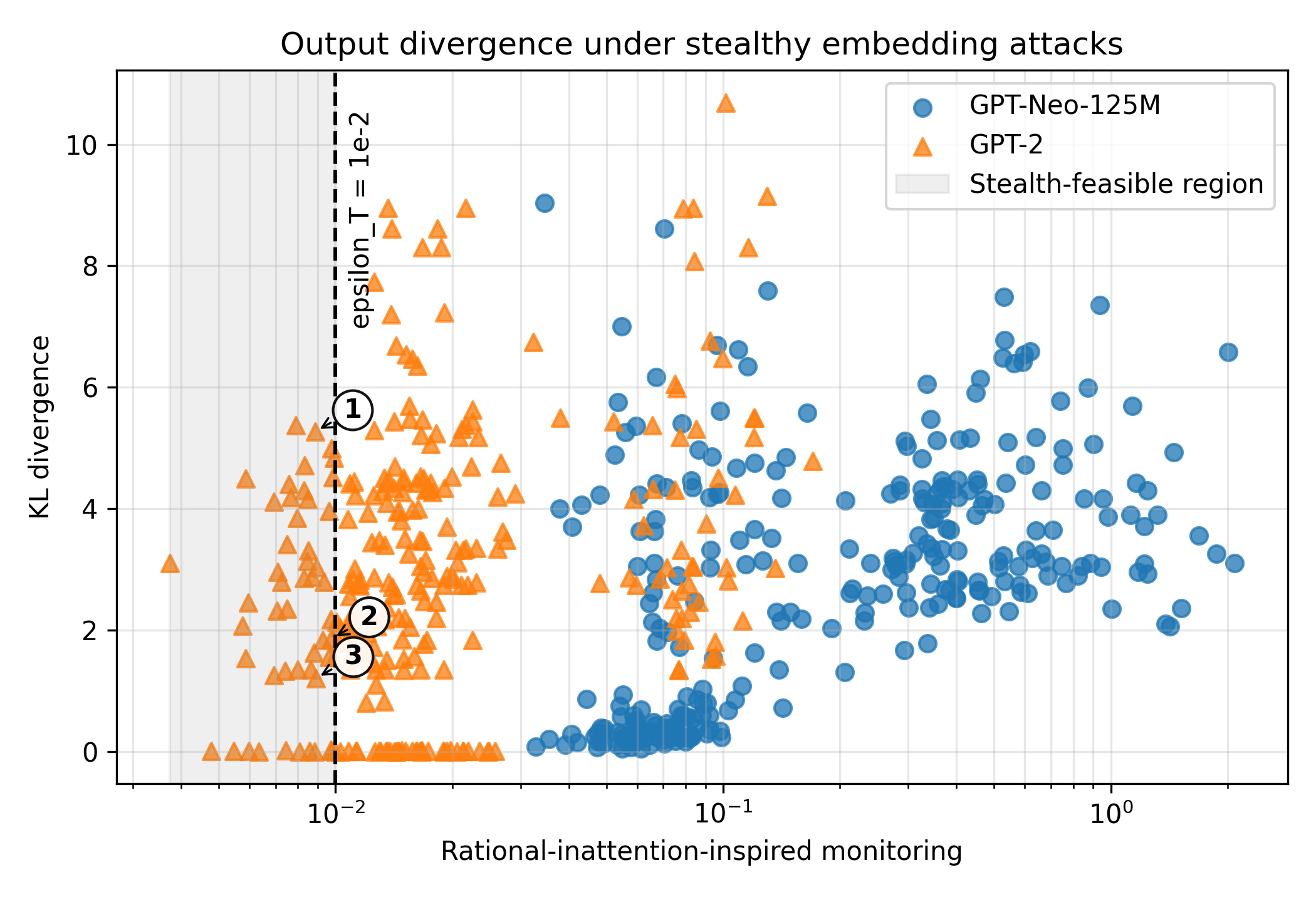}
    % \vspace{-20pt}
    \caption{Two-model comparison under the proposed divergence-based safety measure.
    Each point corresponds to one embedding-input attack on the candidate-adjective prediction task.
    Larger KL values inside this region indicate larger stealthy output changes under the same attack budget.
    The numbered GPT-2 examples correspond to the following prediction changes:
    (1) gate: open $\rightarrow$ unlocked;
    (2) laboratory: open $\rightarrow$ closed;
    and (3) restaurant: open $\rightarrow$ closed.
    }
    % \vspace{-10pt}
    \label{fig:two_model_safety_comparison}
\end{figure}

We illustrate the proposed divergence-based safety measure \eqref{llm_sec_metric_one} on two pretrained causal language models: GPT-2 \cite{radford2019language} and GPT-Neo-125M \cite{black2021gpt}. The purpose of the experiment is to show that the proposed measure can compare model-level safety under the same attack budget and trust tolerance. The code used to obtain the results is available at
\url{https://github.com/tungnguyenkstnk58/RI_LLMs}. The implementation was developed with partial assistance from Microsoft Copilot, accessed through \href{https://www.uu.se/en/students/it-for-students/software/ai-tool}{\textit{Uppsala University's licensed version}} under its Microsoft data-protection agreement, and verified by the authors.

Both models used in the experiment have $L=12$ transformer layers and $H=12$ attention heads per layer. Therefore, the multi-head monitoring scheme contains $LH=144$ layer-head pairs. The hidden dimension is $768$, and each head has dimension $d_k=64$, giving the attention temperature in \eqref{llm_def_attention_weight} as $\tau=\sqrt{d_k}=8$.

We use the same controlled candidate-prediction task for both models. The prompt set consists of ambiguous sentences of the form ``The [subject] [verb] [preposition] the [object] because it is''. The ambiguity arises from the pronoun ``it'', whose antecedent may be either the [subject] or the [object], resulting in different plausible adjective predictions. The next-token distribution is restricted to a shared pool of common predicative physical/state adjectives, e.g., \texttt{open}, \texttt{closed}, \texttt{hard}, \texttt{soft}, \texttt{safe}, \texttt{dangerous}, \texttt{hot}, and \texttt{cold}. For each prompt, all input-token embeddings are perturbed under the same energy budget, and attacks are generated by projected gradient ascent using the candidate-adjective KL divergence and the rational-inattention-inspired monitoring residual.

The experimental results are reported in 
Figure~\ref{fig:two_model_safety_comparison}, where we compare the attack-induced output divergences of GPT-2 and GPT-Neo-125M. Each point corresponds to one successful embedding-input attack, which maximally alters the output prediction. 
%The horizontal axis reports the rational-inattention-inspired monitoring residual, and the vertical axis reports the KL divergence between the attacked and clean candidate-adjective distributions. 
The shaded region denotes the trust tolerance $\epsilon_T = 10^{-2}$. Therefore, points inside this region correspond to attacks that remain stealthy with respect to the monitoring output. Larger KL values inside this region indicate larger stealthy output changes under the same attack budget.

The two models exhibit different safety profiles. GPT-2 admits several high-divergence attacks close to or inside the stealth-feasible region, whereas many high-impact GPT-Neo-125M attacks occur at larger monitoring residuals. This illustrates that the proposed divergence-based safety measure can distinguish the model-level safety under the same attack budget and trust tolerance.

% \vspace{-1pt}
\section{Conclusions}
This paper introduced a divergence-based safety measure for large language models under embedding-input attacks. The measure quantifies the worst-case KL divergence between clean and attacked output distributions subject to attack-energy and stealthiness constraints. The stealthiness monitor was constructed from a rational-inattention perspective by interpreting transformer attention as an entropy-regularized information-allocation mechanism, which led to a head-wise bounded-rationality measure. Experiments on pretrained language models showed that the proposed measure can distinguish model-level safety profiles under the same attack budget and trust tolerance. This work is expected to contribute to the development of interpretable performance measures for LLMs in the same spirit that control systems are equipped with robustness, resilience, and safety measures.

%
% ---- Bibliography ----
%
% BibTeX users should specify bibliography style 'splncs04'.
% References will then be sorted and formatted in the correct style.
%
\bibliographystyle{splncs04}
\bibliography{mybibfile}
%
% \begin{thebibliography}{8}

\appendix
\section{}
\begin{Lemma}
    For any vector $p \in \Rbb^{q}$, where $\sum_i^q p_i = 1$ and $p_i \geq 0$, we have 
\begin{align}
        \text{diag}(p_t) - p_t p_t^\top \preceq \frac{1}{2} I. \label{pt_cond_1/2}
    \end{align}
    \QET
\end{Lemma}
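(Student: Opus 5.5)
We need to show that $\Sigma \triangleq \text{diag}(p) - pp^\top$ satisfies $x^\top \Sigma x \le \tfrac12 \norm{x}_2^2$ for every $x \in \Rbb^q$. Here we write $p$ for the vector denoted $p_t$ in the statement.

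The natural route is a variance interpretation. For any $x$, we have $x^\top \Sigma x = \sum_i p_i x_i^2 - (\sum_i p_i x_i)^2$, which is the variance of a random variable $X$ taking value $x_i$ with probability $p_i$. So the claim reduces to the inequality $\mathrm{Var}_p(X) \le \tfrac12 \sum_i x_i^2$.

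I will bound the variance using the pairwise representation
\begin{align}
\mathrm{Var}_p(X) = \tfrac12 \sum_{i,j} p_i p_j (x_i - x_j)^2 = \sum_{i<j} p_i p_j (x_i-x_j)^2 .
\end{align}
Next I use $(x_i - x_j)^2 \le 2(x_i^2 + x_j^2)$, which gives
\begin{align}
\mathrm{Var}_p(X) \le \sum_{i<j} 2 p_i p_j (x_i^2 + x_j^2) = \sum_i x_i^2 \, p_i (1-p_i).
\end{align}
The identity on the right holds because, after symmetrizing, $\sum_{j\ne i} p_j = 1-p_i$.

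Finally, $p_i(1-p_i) \le \tfrac14$ for $p_i \in [0,1]$. This yields $\mathrm{Var}_p(X) \le \tfrac14 \norm{x}_2^2$, which is even stronger than the required $\tfrac12$ bound, so the lemma follows.

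The only step needing care is the pairwise-variance identity together with the resummation $\sum_{i<j} p_ip_j(x_i^2+x_j^2) = \tfrac12\sum_i x_i^2 p_i(1-p_i)$. This is a routine expansion using $\sum_i p_i = 1$.

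As a cross-check, there is also a Gershgorin argument. The $i$-th diagonal entry of $\Sigma$ is $p_i(1-p_i)$, and the off-diagonal absolute row sum is $p_i\sum_{j\neq i}p_j = p_i(1-p_i)$. Every Gershgorin disc therefore lies in $[0, 2p_i(1-p_i)] \subseteq [0,\tfrac12]$, so $\lambda_{\max}(\Sigma) \le \tfrac12$ directly. Either argument suffices, and I would present the Gershgorin one as the main proof for brevity.
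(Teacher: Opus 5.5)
Your Gershgorin argument is correct and is exactly the paper's proof: the diagonal entry and the off-diagonal row sum both equal $p_i(1-p_i)$, so each real eigenvalue lies in $[0,2p_i(1-p_i)]\subseteq[0,\tfrac12]$. The variance route that you call the natural one, however, contains an arithmetic error that leads to a false conclusion. The resummation is off by a factor of two. Each unordered pair $\{i,j\}$ contributes $p_ip_j$ to the coefficient of $x_i^2$ with no halving, so
\[
\sum_{i<j} p_ip_j\,(x_i^2+x_j^2) \;=\; \sum_i x_i^2\, p_i(1-p_i),
\]
not $\tfrac12\sum_i x_i^2 p_i(1-p_i)$. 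For $q=2$, both sides equal $p_1p_2(x_1^2+x_2^2)$. Consequently your chain actually yields $\mathrm{Var}_p(X)\le 2\sum_i p_i(1-p_i)\,x_i^2\le \tfrac12\norm{x}_2^2$. That is exactly the constant in the lemma, and nothing stronger.

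The claimed $\tfrac14$ bound is in fact false, so it cannot be ``even stronger than required.'' Take $p=(\tfrac12,\tfrac12)$ and $x=(1,-1)$. Then $x^\top(\text{diag}(p)-pp^\top)x = 1 = \tfrac12\norm{x}_2^2$, so the constant $\tfrac12$ is attained and the lemma is tight. Your two arguments returned different constants, and that discrepancy should have prompted exactly this two-point check rather than being recorded as a cross-check. Once the factor is fixed, the variance argument is not really a different route either. Bounding $(x_i-x_j)^2\le 2(x_i^2+x_j^2)$ is the quadratic-form proof of Gershgorin's bound for a symmetric matrix, and it produces the same row weights $2p_i(1-p_i)$. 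So present the Gershgorin proof as you intended, and either drop the variance paragraph or correct it to the $\tfrac12$ constant.
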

\begin{proof}
    Let us denote $M(p) = \text{diag}(p) - p p^\top$, where $[M(p)]_{ii} = p_i - p_i^2$ and $[M(p)]_{ij} = -p_i p_j$ for $j \neq i$. Using the Gershgorin circle theorem, the eigenvalue of $M(p)$, denoted $\lambda_M$, lies in the following range for some $i$:
    \begin{align}
        \textstyle \lambda_M &\geq  |[M(p)]_{ii}| - \textstyle \sum_{j\neq i} |[M(p)]_{ij}|, \\
        \lambda_M &\leq  |[M(p)]_{ii}| + \textstyle \sum_{j\neq i} |[M(p)]_{ij}|.
    \end{align}
    On the other hand, $p_i + \sum_{j \neq i} p_j = 1$ gives us
    \begin{align}
        \textstyle \sum_{j\neq i} |[M(p)]_{ij}| = p_i \textstyle \sum_{j \neq i} p_j 
        = p_i (1 - p_i) = [M(p)]_{ii}.
    \end{align}
    Therefore, the eigenvalue $\lambda_M$ lies in the range $\big[0, 2p_i (1 - p_i) \big]$. Furthermore, we also have 
   \begin{align}
       M(p) \preceq \max \lambda_M I \preceq 2p_i (1 - p_i) I = \frac{1}{2} I - \bigg( \sqrt{2}p_i - \frac{1}{\sqrt{2}} \bigg)^2 I \preceq \frac{1}{2} I.
   \end{align}
   \QEDB
\end{proof}

\end{document}